\def\ps@pprintTitle{%
 \let\@oddhead\@empty
 \let\@evenhead\@empty
 \def\@oddfoot{\centerline{\thepage}}%
 \let\@evenfoot\@oddfoot}
\newtheorem{theorem}{Theorem}
\newtheorem{prop}{Proposition}
\newtheorem*{remark}{Remark}
\begin{document}

\title{A novel distribution-free hybrid regression model for manufacturing process efficiency
improvement}
\author{Tanujit Chakraborty\footnote[1]{\textit{Corresponding author}:
Tanujit Chakraborty (tanujit\_r@isical.ac.in)}, Ashis Kumar Chakraborty\textsuperscript{2}, Swarup Chattopadhyay\textsuperscript{3}\\
    {\scriptsize \textsuperscript{1, 2 and 3} Indian Statistical Institute, 203, B. T. Road, Kolkata - 700108, India}}
\begin{abstract}
This work is motivated by a particular problem of a modern paper
manufacturing industry, in which maximum efficiency of the
fiber-filler recovery process is desired. A lot of unwanted
materials along with valuable fibers and fillers come out as a
by-product of the paper manufacturing process and mostly goes as
waste. The job of an efficient Krofta supracell is to separate the
unwanted materials from the valuable ones so that fibers and fillers
can be collected from the waste materials and reused in the
manufacturing process. The efficiency of Krofta depends on several
crucial process parameters and monitoring them is a difficult
proposition. To solve this problem, we propose a novel hybridization
of regression trees (RT) and artificial neural networks (ANN),
hybrid RT-ANN model, to solve the problem of low recovery percentage
of the supracell. This model is used to achieve the goal of
improving supracell efficiency, viz., gain in percentage recovery.
In addition, theoretical results for the universal consistency of
the proposed model are given with the optimal value of a vital model
parameter. Experimental findings show that the proposed hybrid
RT-ANN model achieves higher accuracy in predicting Krofta recovery
percentage than other conventional regression models for solving the
Krofta efficiency problem. This work will help the paper
manufacturing company to become environmentally friendly with
minimal ecological damage and improved waste recovery.
\end{abstract}

\begin{keyword}
Manufacturing process; Krofta efficiency; Hybrid model; Regression
tree; Artificial neural network
\end{keyword}

\maketitle

\section{Introduction} \label{Introduction}
Regression problems arise in many practical situations where a
specific response variable can be expressed through a relationship
with the so-called causal variables. In practical applications, it
becomes quite challenging to identify the right set of causal
variables. This article is motivated while we were dealing with a
problem in the modern paper industry. It is well known that paper is
produced from pulp, fibers, fillers, and other chemicals and it
needs a considerable water resource as well. As it is, a lot of
fibers and fillers obtained from the base material are drained out
along with outlet water that comes from the paper machine. To save
these valuable materials from being drained out, a system called
Krofta supracell \cite{krofta1990water} is used by modern paper
manufacturing industries and many other similar types of industries.
An efficient recovery system of fibers and fillers from the outlet
flow will naturally be cost saving and help the industry to remain
competitive in the market. This is also known as the fiber-filler
recovery process which is practically a dissolved air flotation cum
sedimentation process \cite{koukoulas2003method}. An input to this
process is the lean backwater which comes out of the main
papermaking process and hence not only contains a lot of unwanted
materials but also contains fibers and fillers which forms the basis
of making any paper. Collection of such fibers and fillers
efficiently will save cost and protects the environment where
finally these materials are thrown in as wastage. By addressing this
problem, one can make the paper manufacturing process an
environment-friendly and sustainable manufacturing process with
minimal ecological damage also. In manufacturing processes,
particularly processes of chemical types, knowledge of the causal
variables which affect the Krofta efficiency, is very limited and
one has to start from scratch to gather such a knowledge
\cite{chen1998treatment}.

A lot of preliminary discussions and analysis helped to determine a
set of possible variables, some of which are expected to be critical
causal variables for Krofta efficiency improvement. Details of these
variables are given in Section 4.2. The focus of the problem was to
judge and decide statistically which variables need to be controlled
to improve the Krofta efficiency to at least 80\% from the present
level of 60\%. While trying to develop a methodology to ensure the
targeted improvement, one needs to think about the robustness of the
method developed. To ensure robustness, we took recourse to
distribution-free regression models. Nonparametric statistical and
machine learning models like decision trees, random forest, support
vector regressions (SVR) and artificial neural networks, etc have
been applied to solve several problems in analyzing water quality of
river \citep{singh2009artificial, mahuli1993ph}, surface water
planning \citep{gmar2017electrodialytic, bhattacharya2005neural} and
urban water demand forecasting \citep{brentan2017hybrid,
sebri2016forecasting}. It was found that decision trees and neural
networks (NN) can model arbitrary decision boundaries. To utilize
the positiveness of these two models, theoretical frameworks for
combining both the models are often used jointly to make decisions.
The idea of mapping tree-based models into ANN are presented by
several papers to solve many supervised learning problems
\citep{sethi1990entropy, sirat1990neural}. Similar designs in the
interface area of decision trees and neural networks can be found in
some recent literature as well \citep{chen2005time,
balestriero2017neural}. In spite of the use of neural tree models in
practical problems of classification, regression, and forecasting,
very little is known about the statistical properties of these
models. Thus, the significant disadvantages of these algorithms are
having poor robustness and having many free tuning parameters.

Motivated by the above discussion, we propose in the present paper a
hybrid RT-ANN model and discuss its statistical properties with the
optimal values of the model parameters. In this model, a consistent
RT algorithm is used as an important causal variable selection
algorithm. Further, a neural network with the input variables chosen
from RT along with RT algorithm outputs is trained to minimize the
empirical risk on the training set which results in a universally
consistent hybrid model. We have then introduced the consistency
results of hybrid RT-ANN model which assures a basic theoretical
guarantee of robustness of the proposed algorithm. The proposed
hybrid RT-ANN model has the advantages of significant accuracy,
converges much faster than complex hybrid models and easy
interpretability as compared to more ``black-box-like" advanced
neural networks. Also, no parametric assumptions are made on the
distribution of the input and output variables in our proposed
model. Our model has the advantages of less number of tuning
parameters and is useful for high-dimensional small data sets and
complex data structures. Unlike other regression models, hybrid
RT-ANN model doesn't have any strong assumption about the normality
of the data and homoscedasticity of the noise terms. This model is
applied to solve the Krofta efficiency problem of the paper
manufacturing company to achieve a specific target for improving
supracell efficiency. Through regression modeling, we obtained the
necessary control parameters and this model also explains the Krofta
recovery percentage with higher accuracy than other conventional
regression models. Based on the recommendation of the regression
analysis, an experimental design is run to find optimum levels of
control parameters which are then implemented in the process and
results in better recovery than the targeted retention of valuable
materials. Our recommendations not only results in an economic gain
and wastage reduction but also improved production process and
environmental benefits.

The paper is organized as follows. In Section 2, we introduce hybrid
RT-ANN model and its statistical properties are discussed in Section
3. Usefulness of the proposed model in improving manufacturing
process efficiency is shown in Section 4. Finally, the concluding
remarks are given in Section 5.

\section{Proposed Hybrid model} \label{Proposed_hybrid_model}

One of the ultimate goals of designing a regression model is to
select the best possible regressors which can predict the response
variable accurately. Regression trees (RT) and artificial neural
networks (ANN) are competitive techniques for modeling regression
problems. RT is a hierarchical nonparametric model
\cite{breiman2017classification} relatively superior to ANN in the
readability of knowledge. But, ANNs are better in implementing
comprehensive inference over the inputs \cite{tokar1999rainfall}.
General sufficient condition for the consistency of data-driven
histogram-based regression estimates is when the size of the tree
grows with the number of samples at an appropriate
rate\cite{nobel1996histogram}. And it was theoretically proven that
if a one-layered neural network is trained with an appropriately
chosen number of neurons to minimize the empirical risk on the
training data, then it results in a universally consistent neural
network estimates \cite{lugosi1995nonparametric}.

In the proposed model, we have used RT as a feature selection
algorithm \cite{narendra1977branch} which has a built-in mechanism
to perform feature selection \cite{quinlan1993c4}. RT as a Feature
selection algorithm has many advantages such as its ability to
select important features from high dimensional feature sets
\cite{dash1997feature}. Feature selection using RT algorithm can be
characterized by the following:

\begin{itemize}
\item With the set of available features as an input to the algorithm; a tree is created.
\item RT has leaf nodes, which represent predicted regression output value.
\item The tree branches represent each possible value of the feature node from which it originates.
\item RT is used to choose feature vectors by starting from its root and moving through it until a leaf node is identified.
\item At each decision node in the RT, one can select the most useful feature using appropriate estimation criteria.
The criterion used to identify the best possible features involves
mean squared error for regression problems.
\end{itemize}

In our proposed methodology, we first split the feature space into
several areas by RT algorithm. Most important features are chosen
using RT and redundant features are eliminated. Further, we build
the ANN model using the important variables obtained through RT
algorithm along with prediction results made by RT method which are
used as another input information in the input layer of neural
networks. We then run the ANN model with one hidden layer and the
optimum value of the number of neurons in the hidden layer is
proposed in Section 3. Since we have taken RT output as an input
feature in ANN model, the number of hidden layers are chosen to be
one for further modeling with ANN algorithm, and it is also shown to
be universally consistent with some regularity conditions in Section
3. The effectiveness of the proposed hybrid RT-ANN model lies in the
selection of important features and using regression outputs of the
RT model followed by the ANN model. The inclusion of RT output as an
input feature increases the dimensionality of feature space and will
also increase class separability \cite{chakraborty2018novela}. The
informal work-flow of our proposed hybrid model, shown in Figure 1,
is as follows.
\begin{itemize}
\item  First, we apply a regression tree algorithm to train and build a decision tree model that calculates important features as a feature selection algorithm.
\item  The prediction result of RT algorithm is used as an additional feature in the input layer of the ANN model.
\item  We use important input variables obtained from RT along with an additional input variable (RT output) to develop an appropriate ANN model, and the network is generated.
\item  Run the ANN algorithm with sigmoid activation function and the optimal number of hidden layer neurons as shown in Section 3 and record the regression outputs.
\end{itemize}
This algorithm is a two-step problem-solving approach such as
feature selection using RT and then using all the outputs of feature
selection algorithm in the following regression analysis to get a
better model in terms of accuracy. Our proposed model can be used
for identifying important features which will satisfy a specific
goal to solve Krofta efficiency problem and also can be employed for
modeling Krofta recovery percentage in terms of important causal
process parameters. Though the model was developed primarily to
solve the Krofta efficiency problem, can be used in other similar
situations as well.

On the theoretical side, it is necessary to show the universal
consistency of the proposed model and other statistical properties
for its robustness. We will now introduce a set of regularity
conditions to prove the risk consistency of RT as feature selection
algorithm and the importance of RT output in a further model
building with the ANN algorithm in the proposed hybrid model.

\begin{figure}[t]
\centering
\includegraphics[scale=0.50]{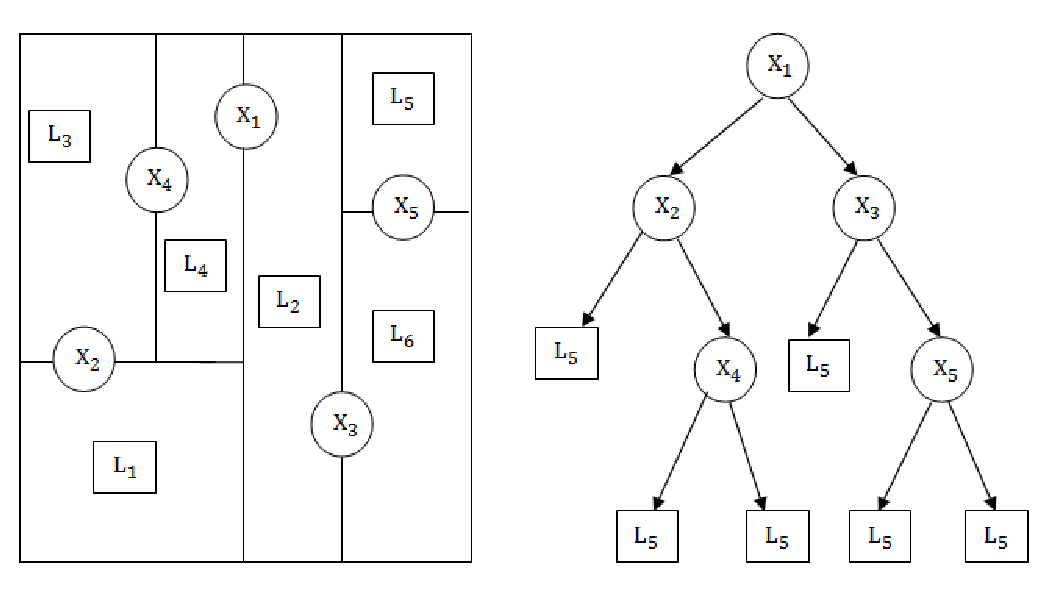}
\includegraphics[scale=0.50]{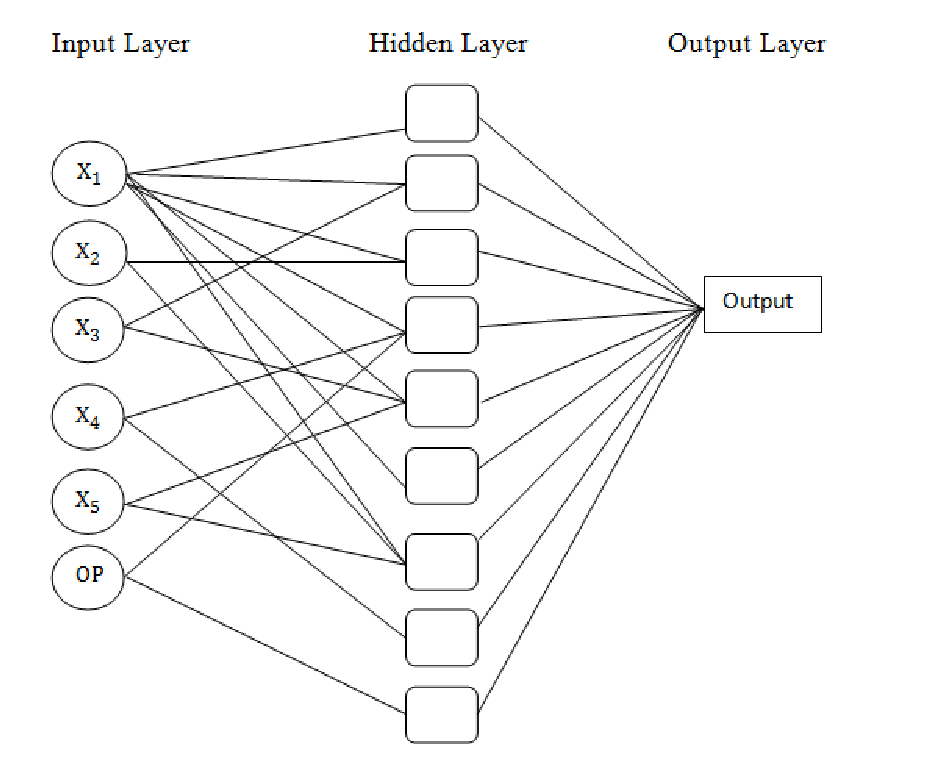}
\caption{An example of hybrid RT-ANN model with $X_{i}; i=1,2,3,4,5$
as Important features obtained by RT, $L_{i}$ as leaf nodes and OP
as RT output.} \label{figmodRWNGT}
\end{figure}

\section{Statistical Properties of Hybrid RT-ANN model} \label{Statistical Properties of Hybrid RT-ANN model}

RT has a built-in mechanism to perform feature selection
\cite{breiman2017classification,quinlan1993c4} and it has the
ability to select important features from high dimensional feature
sets \cite{narendra1977branch,dash1997feature}. To explore its
statistical properties, we are going to investigate the sufficient
condition(s) for consistency of RT. For a wide range of
data-dependent partitioning schemes, the consistency of
histogram-based regression estimates was shown in the literature
\cite{nobel1996histogram}. It requires a set of regularity
conditions to be satisfied to show the consistency of histogram
regression estimates. Also, it is assumed to have regression
variables to be bounded throughout. But in our case, we will
represent regression trees where the partitions are chosen to have
rectangular cells, i.e., regression trees employing axis-parallel
splits and response variable can take values within a certain range.
There is no assumption made on the distributions of predictor and
response variables.

Let $\underline{X}$ be the space of all possible values of $p$
features, i.e. $\underline{X}=(X_{1},X_{2},...,X_{p})$ and
$\underline{Y}$ be set of all possible values that the response
variable can take and $Y \in [-K,K]$. We wish to estimate regression
function $r(x)= E(Y | X = x) \in [-K,K]$ based on a training sample
$L_{n}=\{(X_{1},Y_{1}), (X_{2},Y_{2}),...,(X_{n},Y_{n})\}$, where
$X_{i}=(X_{i1},X_{i2},...,X_{ip}) \in \underline{X}$ with n observations and let
$\Omega=\{\omega_{1},\omega_{2},...,\omega_{k}\}$ be a partition of
the feature space $\underline{X}$. We denote $\widetilde{\Omega}$ as one such
partition of $\Omega$. Define
$(L_{n})_{\omega_{i}}=\{(X_{i},Y_{i})\in L_{n}: X_{i}\in \omega_{i},
Y_{i}\in [-K,K]\}$ to be the subset of $L_{n}$ induced by
$\omega_{i}$ and let $(L_{n})_{\widetilde{\Omega}}$ denote the
partition of $L_{n}$ induced by $\widetilde{\Omega}$.

The criterion used to identify the best features involves mean
squared error for regression problems. We don't make any assumption
on the distribution of the pair $(\underline{X},\underline{Y}) \in
\mathbb{R}^{p} \times [-K,K]$. Mean squared error is used to
partition feature space into a set $\widetilde{\Omega}$ of nodes.
There exists a partitioning regression function
$d:\widetilde{\Omega} \rightarrow Y$ such that $d$ is constant on
every node of $\widetilde{\Omega}$. Now let us define
$\widehat{L}_{n}$ be the space of all learning samples and
$\mathbb{D}$ be the space of all partitioning regression function
then, binary partitioning and regression tree based rule
$\Phi:\widehat{L}_{n} \rightarrow \mathbb{D}$ such that
$\Phi(L_{n})=(\psi \circ \phi)(L_{n})$, where $\phi$ maps $L_{n}$ to
some induced partition $(L_{n})_{\widetilde{\Omega}}$ and $\psi$ is
an assigning rule which maps $(L_{n})_{\widetilde{\Omega}}$ to a
partitioning regression function $d$ on the partition
$\widetilde{\Omega}$. The most basic reasonable assigning rule
$\psi$ is the plurality rule
$\psi_{pl}((L_{n})_{\widetilde{\Omega}})=d$ such that if $x \in
\omega_{i}$, then
\[
d(\underline{x})=\arg \min_{Y_{i}\in
[-K,K]}|\frac{1}{n}\sum_{i=1}^{n}(\Phi(L_{n})-Y_{i})^{2}|
\]

The stopping rule in RT is decided based on the minimum number of
split in the posterior sample called minsplit. If minsplit $\geq \alpha$
then $\omega_{i}$ will split into two child nodes and if minsplit $<
\alpha$ then $\omega_{i}$ is a leaf node and no more split is
required. Here $\alpha$ is determined by the user, usually it is
taken as 10\% of the training sample size. Now let $\mathcal{T}
=(\widetilde{\Omega}_{1},\widetilde{\Omega}_{2},...)$ be a finite
collection of partitions of a measurement space $\underline{X}$. Let
us define maximal node count of $\mathcal{T}$ as the maximum number
of nodes in any partition $\widetilde{\Omega}$ in $\mathcal{T}$
which can be written as
\[
\lambda(\mathcal{T})=\sup_{\widetilde{\Omega}_{i} \in
\mathcal{T}}|\widetilde{\Omega}_{i}|
\]
Also let, $\Delta (\mathcal{T}, L_{n})=|\{( L_{n})_{\widetilde{\Omega}}:
\widetilde{\Omega} \in \mathcal{T}\}|$ be the number of distinct
partitions of a training sample of size n induced by
partitions in $\mathcal{T}$. Let $\Delta_{n}(\mathcal{T})$ be the growth
function of $\mathcal{T}$ defined as
\[
\Delta_{n}(\mathcal{T})=\sup_{\{L_{n}:|L_{n}|=n\}} \Delta(\mathcal{T},L_{n}).
\]
Growth function of $\mathcal{T}$ is the maximum number of distinct
partitions $(L_{n})_{\widetilde{\Omega}}$ which partition
$\widetilde{\Omega}$ in $\mathcal{T}$ can induce in any training
sample with n observations. For a partition $\widetilde{\Omega}$ of
X, $\widetilde{\Omega} [x \in X]=\{ \omega_{i} \in
\widetilde{\Omega}: x \in \omega \}$ be the node $\omega_{i}$ in
$\widetilde{\Omega}$ which
contains $x$.\\

For consistency of any histogram based regression estimates, the
sub-linear growth of restricted cell counts (see in Eqn. 1),
sub-exponential growth of a combinatorial complexity measure (see in
Eqn. 2) and shrinking cell (see in Eqn. 3) conditions are to be
satisfied \cite{nobel1996histogram,sugumaran2007feature} and these
are as follows :

\begin{equation}
\frac{\lambda(\mathcal{T}_{n})}{n}\rightarrow 0 \quad  \mbox{as}
\quad n \rightarrow \infty.
\end{equation}
\begin{equation}
\frac{log(\triangle_{n}(\mathcal{T}_{n}))}{n}\rightarrow 0 \quad
\mbox{as} \quad n \rightarrow \infty
\end{equation}
and for every $\gamma > 0$ and $\delta \in (0,1)$,
\begin{equation}
\inf_{S\subseteq \mathbb{R}^{p} : \ P(S) \geq 1-\delta} \ P({x:
diam(\tilde{\Omega}_{n}[x]\cap S) > \gamma})\rightarrow 0 \quad
\quad \quad \quad \quad  w.p.\quad 1
\end{equation}

Now we are going to evolve a sufficient regularity condition for the
binary partitioning and regression tree based rule to be universally
consistent. It is to be shown that optimal
regression trees are consistent when the size of the tree grows as
$o(\frac{n}{log(n)})$, where n is the number of training samples and
it implies the regularity conditions of Eqn. (1), (2) and (3).

\begin{theorem}
Suppose $(\underline{X},\underline{Y})$ be a random vector in
$\mathbb{R}^{p} \times [-K,K]$ and $L_{n}$ be the training set of n
outcomes of $(\underline{X},\underline{Y})$. Finally if for every n
and $w_{i}\in \tilde{\Omega}_{n}$, the induced subset
$(L_{n})_{w_{i}}$ contains at least $k_{n}$ of the vectors of
$X_{1},X_{2},...,X_{n}$, then empirically optimal regression trees
strategy employing axis parallel splits are consistent when the size
$k_{n}$ of the tree grows as $o(\frac{n}{log(n)})$.
\end{theorem}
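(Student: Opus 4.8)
The plan is to recognise the empirically optimal axis-parallel regression tree as a data-dependent histogram regression estimate of exactly the kind treated in \cite{nobel1996histogram}, and then to reduce the theorem to checking the three regularity conditions (1)--(3) for the family $\mathcal{T}_n$ of admissible partitions. The rule $\Phi(L_n)=(\psi_{pl}\circ\phi)(L_n)$ is such an estimate: $\phi$ selects, among all partitions realisable by binary axis-parallel splits subject to the \textit{minsplit} stopping rule --- so that every cell keeps at least $k_n$ of the $X_i$ --- the partition $\widetilde{\Omega}_n$ minimising the empirical mean-squared error, and $\psi_{pl}$ assigns to each cell the average of the $Y_i$ lying in it. Since $Y\in[-K,K]$ and no assumption is placed on the law of $(\underline X,\underline Y)$, \cite{nobel1996histogram} then gives universal $L_2$-consistency (indeed a.s.\ consistency), $E\int(\Phi(L_n)(\underline x)-r(\underline x))^2\,\mu(d\underline x)\to 0$ with $\mu$ the distribution of $\underline X$, as soon as (1), (2) and (3) hold. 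Thus the entire proof is the implication ``$k_n=o(n/\log n)\ \Rightarrow\ $ (1), (2), (3)''.

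I would dispose of (1) and (2) first; this is where the rate $o(n/\log n)$ is used. The stopping rule forces $|\widetilde{\Omega}|\le n/k_n$ for every admissible partition, so $\lambda(\mathcal{T}_n)\le n/k_n$ and $\lambda(\mathcal{T}_n)/n\to 0$, which is (1). For (2) I would bound the growth function combinatorially: a partition in $\mathcal{T}_n$ is the outcome of at most $\lambda(\mathcal{T}_n)$ internal splits; a single axis-parallel split of a set of at most $n$ points has at most $pn$ distinct effects (a coordinate among $p$, a threshold among at most $n$); and there are at most $4^{\lambda(\mathcal{T}_n)}$ rooted binary shapes on $\lambda(\mathcal{T}_n)$ leaves, whence $\Delta_n(\mathcal{T}_n)\le(4pn)^{\lambda(\mathcal{T}_n)}$ and $\tfrac1n\log\Delta_n(\mathcal{T}_n)\le\tfrac{\lambda(\mathcal{T}_n)}{n}\log(4pn)\to 0$. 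It is precisely the $\log n$ produced by the $pn$ candidate thresholds that makes $n/\log n$, rather than a cruder $o(n)$, the correct scale here.

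The genuine difficulty, which I expect to be the main obstacle, is the shrinking-cell condition (3) for the \emph{data-chosen} optimal partition. Because the empirical mean-squared error never increases under a further admissible split, the optimal partition refines as far as the stopping rule allows, so it has of order $n/k_n$ cells each holding of order $k_n$ points, and since $k_n=o(n)$ every cell carries vanishing $\mu$-mass. One would like to deduce that, restricted to any high-probability set $S$, these cells have diameter at most $\gamma$ eventually; but axis-parallel cells of small $\mu$-mass need not have small diameter (a thin slab can have mass $o(1)$ and diameter $\Theta(1)$), and, unlike a $k$-d tree cycling through all $p$ coordinates, an MSE-optimal tree has no built-in mechanism forcing every coordinate to be cut. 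So (3) has to be extracted from the optimality criterion itself --- a slab left unrefined along its long direction is one on which the conditional mean of $Y$ is essentially flat, which should let one substitute a finer admissible partition without increasing the empirical risk, contradicting optimality for large $n$ via boundedness of $Y$ and the law of large numbers inside a cell carrying $k_n\to\infty$ points --- or else under a mild additional regularity assumption; this is the step that requires real work, and threading the resulting estimate through the deviation bound furnished by (1) and (2) closes the argument. With (1)--(3) in hand, \cite{nobel1996histogram} yields the universal consistency of the regression-tree component, which is the assertion of the theorem.
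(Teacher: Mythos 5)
Your overall route coincides with the paper's: view the empirically optimal axis-parallel tree as a data-dependent histogram regression estimate in the sense of Nobel and reduce the theorem to checking conditions (1)--(3). Your handling of (1) and (2) is in substance the paper's as well: the paper bounds $\Delta_{n}(\mathcal{T}(k_{n}))\leq n^{(k_{n}-1)p}$ via Cover's theorem, you bound it by $(4pn)^{\lambda(\mathcal{T}_{n})}$ by counting split coordinates and data-determined thresholds; either way $\frac{1}{n}\log\Delta_{n}=O\big(\frac{\lambda(\mathcal{T}_{n})\log n}{n}\big)\rightarrow 0$ under the stated rate. One caution: you slide between two readings of $k_{n}$. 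For (1) you take $k_{n}$ to be the minimum number of sample points per cell, so that $\lambda(\mathcal{T}_{n})\leq n/k_{n}$; but then your bound for (2) becomes $\log(4pn)/k_{n}$, which tends to zero only if $k_{n}/\log n\rightarrow\infty$, and that is not implied by $k_{n}=o(n/\log n)$. The paper's proof resolves the (admittedly self-contradictory) statement the other way, taking $k_{n}$ to be the number of leaves, so that $\lambda(\mathcal{T}(k_{n}))=k_{n}$ and the hypothesis $k_{n}\log n/n\rightarrow 0$ is exactly what (1) and (2) need; you should commit to one reading and state the complementary condition if you keep yours.

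The genuine gap in your proposal is condition (3), which you leave as a sketch --- but you should know the paper does no better: its entire treatment of (3) is one sentence asserting that $\max_{A}\mathrm{diam}(A\cap V)\rightarrow 0$ for every compact $V$, which merely restates the condition without argument. Your diagnosis of why this is the hard step is in fact sharper than the paper's proof. As you observe, an MSE-optimal tree has no mechanism forcing every coordinate to be cut; if the regression function is constant in some coordinate, the empirically optimal tree need never split along it (the paper's own Remark concedes exactly this), so cells can keep diameter bounded away from zero and the shrinking-cell condition can genuinely fail for the optimal partition. Consequently a complete proof cannot proceed by verifying (3) universally for the empirically optimal tree; one must either restrict the admissible trees (forced or cyclic splits, explicit cell-shrinkage constraints), impose additional regularity on the distribution, or invoke a consistency theorem in which (3) is replaced by an approximation-capacity requirement on the partition family. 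Your suggested repair --- replacing an unrefined slab by a finer admissible partition at no increase in empirical risk --- is the right instinct but is not carried out and would require precisely such an extra hypothesis. So your attempt is incomplete at (3); the paper's proof is incomplete at the same point, and your write-up at least locates the missing step correctly.
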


\begin{proof}

Since $\mathcal{T}(k_{n})$ contain all the binary RT partitions
having $k_{n}$ leaves, so $\lambda(\mathcal{T}(k_{n}))=k_{n}$.
Therefore, $\frac{\lambda(\mathcal{T}(k_{n}))}{n}=\frac{k_{n}}{n}$
tends to zero as $n \rightarrow \infty$. Hence,
$\frac{\lambda(\mathcal{T}(k_{n}))}{n} \rightarrow 0$. Thus
condition $(1)$ holds.\\

Now regression trees having $k_{n}$ leaves has $k_{n}-1$ internal
nodes and therefore each partition is based on at most $k_{n}-1$
intersecting half-spaces. And by Cover's theorem
\cite{cover1965geometrical}, any binary split of $\mathbb{R}^{p}$
can divide n points in at most $n^{p}$ ways. So, their intersection
will partition n points in at most $n^{(k_{n}-1)p}$ ways. Thus we
write,
\[
\triangle_{n}(\mathcal{T}(k_{n})) \leq n^{(k_{n}-1)p}
\]
and consequently,
\begin{equation}
\frac{log(\triangle_{n}(\mathcal{T}(k_{n})))}{n} \leq
\frac{p(k_{n}-1)}{n} log(n)
\end{equation}
As, $n \rightarrow \infty$, RHS of equation (4) goes to zero. So
condition $(2)$ holds.

Now $k_{n}=o(\frac{n}{log(n)})$, then for every n and $\omega \in
\tilde{\Omega}_{n}$, the induced subset $(L_{n})_{\omega} \in
\mathcal{T}(k_{n})$ such that for every compact set $V \subseteq
\mathbb{R}^{d}$ we can write $max_{A \in (L_{n})_{\omega_{i}}} diam
(A \cap V) \rightarrow 0$. This implies condition $(3)$ is satisfied and hence the theorem.\\
\end{proof}

\begin{remark}
Thus, it may be noted that regression trees are empirically
consistent when the size $k_{n}$ of the tree grows as
$o(\frac{n}{log(n)})$. It is interesting to know that without
employing sufficient conditions as mentioned in equation (1), (2)
and (3), we can have a universally consistent regression tree
provided the regularity condition of Theorem 1 is satisfied. It is
also noted that with sufficiently large n, the optimal regression
tree will not divide the regions of the feature space on which the
regression function is constant. We further conclude that feature
selection using RT algorithm is justified and RT output will also
play an important role in designing the regression model for
increasing the predictive accuracy of the model. It should also be
mentioned that incorporating RT output as an input feature in ANN,
the dimensionality gets increased, thus the performance of the ANN
model will be improved at a significant rate
\cite{kohonen1988statistical}.
\end{remark}

Our proposed hybrid model has two parts: extracting important
features from the feature space using RT algorithm and building one
hidden layered the ANN model with the important features extracted
using RT along with RT output as another input vector in the ANN
model. In our base model, the dimension of the input layer, denoted
by $d_{m}(\leq{p})$, is the number of important features obtained by
RT + 1. It is also noted that one-hidden layered neural networks
yield strong universal consistency and there is a little theoretical
gain in considering two or more hidden layered neural network
\cite{devroye2013probabilistic}. In hybrid RT-ANN model, we have
used one hidden layer with k neurons. This makes our hybrid model
less complex and less time consuming while running the model. Our
objective is to state the sufficient condition for universal
consistency and then to find out the optimal value of $k$. Before
stating the sufficient conditions for the consistency of the
algorithm and the optimal number of nodes in the hidden layer, let
us define the
following: \\

Let the random variables $\underline{Z}$ and $\underline{Y}$ take
their values from $\mathbb{R}^{d_{m}}$ and $[-K,K]$ respectively.
Denote the measure of $\underline{Z}$ over $\mathbb{R}^{d_{m}}$ by
$\mu$ and $m:\mathbb{R}^{d_{m}}\rightarrow [-K,K]$ be a measurable
function such that $m(Z)$ approximates $Y$. Given the training
sequence $\xi_{n}=\{ (Z_{1},Y_{1}), (Z_{2},Y_{2}),...,(Z_{n},Y_{n})
\}$ of $n$ iid copies of ($\underline{X},\underline{Y}$), the
parameters of the neural network regression function estimators are
chosen such that it minimizes the empirical $L_{2}$ risk =
$\frac{1}{n} \sum_{j=1}^{n}|f(Z_{j})-Y_{j}|^{2}$. We have used
logistic squasher as sigmoid function in neural network which is defined as follows:\\

\textbf{Definition:} A sigmoid function $\sigma(x)$ is called a
logistic squasher if it is non-decreasing, $\lim_{x\rightarrow
-\infty} \sigma(x)= 0$ and $\lim_{x \rightarrow \infty} \sigma(x)=
1$ with $\sigma(x)=\frac{1}{1+exp{(-x)}}$.\\

Now consider the class of neural networks having logistic squasher
with $k$ neurons in the hidden layer with bounded output weights
\[
\mathscr{F}_{n,k}=\Bigg\{
\sum_{i=1}^{k}c_{i}\sigma(a_{i}^{T}z+b_{i})+c_{0} : k \in
\mathbb{N}, a_{i} \in \mathbb{R}^{d_{m}}, b_{i},c_{i} \in
\mathbb{R}, \sum_{i=0}^{k}|c_{i}|\leq \beta_{n} \Bigg\}
\]
and obtain $m_{n} \in \mathscr{F}_{n,k}$ satisfying
\[
\frac{1}{n} \sum_{i=1}^{n}|m_{n}(Z_{i})-Y_{i}|^{2}  \leq \frac{1}{n}
\sum_{i=1}^{n}|f(X_{i})-Y_{i}|^{2}, \mbox{if} \quad f \in
\mathscr{F}_{n,k}
\]
where, $m_{n}$ be a function that minimizes the empirical $L_{2}$
risk in $\mathscr{F}_{n,k}$. It was shown that if $k$ and
$\beta_{n}$
are chosen to satisfy :\\

$k\rightarrow \infty$, $\beta_{n}\rightarrow \infty$,
$\frac{k\beta_{n}^{4}log(k\beta_{n}^{2})}{n} \rightarrow 0$, and
there exists $\delta (> 0)$ such that
$\frac{\beta_{n}^{4}}{n^{1-\delta}} \rightarrow 0$, then

\[
E\int(m_{n}(z)-m(z))^{2}\mu(dz) \rightarrow 0 \quad \mbox{as}\quad n
\rightarrow \infty
\] and consequently,
\[
\int(m_{n}(z)-m(z))^{2}\mu(dz) \rightarrow 0 \quad  \mbox{as} \quad
n \rightarrow \infty
\]
for all distributions of ($\underline{Z},\underline{Y}$) with
$E|Z^{2}| < \infty$, i.e., $m_{n}$ is strongly universally
consistent (For proof see \cite[Theorem
3]{lugosi1995nonparametric}). The rate of convergence of the neural
network with k neurons with bounded output weights has been pointed
out by \cite{gyorfi2006distribution}. This made us conclude that our
proposed model is strongly consistent under the mentioned regularity
condition and our job is to find an optimal choice of k for our
algorithm. To obtain the optimal value of the number of hidden
neuron in the hidden layer of the hybrid RT-ANN model having $d_{m}$
number of input layers, we state the following proposition :
\newline

\begin{prop}
For any fixed $d_{m}$ and training sequence $\xi_{n}$, let
$Y\in[-K,K]$, and $m, f \in \mathscr{F}_{n,k}$, if the neural
network estimate $m_{n}$ satisfies the above-mentioned regularity
conditions of strong universal consistency and $f$ satisfying
$\int_{S_{r}}f^{2}(z)\mu(dz)<\infty$ where, $S_{r}$ is a ball with
radius r centered at 0, then the optimal choice of $k$ is $O \bigg
(\sqrt{\frac{n}{d_{m}log(n)}} \bigg )$.
\end{prop}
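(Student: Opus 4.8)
The plan is to run the classical approximation--estimation (bias--variance) decomposition of the $L_2$ error of the empirical-risk minimiser $m_n$ over the sieve $\mathscr{F}_{n,k}$, bound the two pieces separately as functions of $k$, $d_m$ and $n$, and then choose $k$ to balance them. Since $Y\in[-K,K]$, one may first replace $m_n$ by its truncation to $[-K,K]$ without increasing the $L_2$ risk, so assume $|m_n|\le K$; then, using the defining property of $m_n$ as an empirical $L_2$-risk minimiser in $\mathscr{F}_{n,k}$, the standard argument yields
\[
E\!\int (m_n(z)-m(z))^2\,\mu(dz)\ \le\ A_k\ +\ B_{n,k},
\]
where $A_k=\inf_{f\in\mathscr{F}_{n,k}}\int (f-m)^2\,d\mu$ is the approximation error and $B_{n,k}=2\,E\sup_{f\in\mathscr{F}_{n,k}}\big|\frac1n\sum_{i=1}^{n}(f(Z_i)-Y_i)^2-E(f(Z)-Y)^2\big|$ is the estimation error.

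Next I would bound the two terms. For $A_k$, the hypothesis $\int_{S_r}f^2\,d\mu<\infty$ is read as finiteness of the relevant (Barron-type spectral) norm of the target on the ball $S_r$, and a Barron-type approximation theorem for one-hidden-layer logistic-squasher networks then gives $A_k\le c_1/k$, with $c_1$ depending only on $r$, $K$ and the target --- not on $k$, $n$ or $d_m$. For $B_{n,k}$, the squared-error loss is uniformly bounded over $\mathscr{F}_{n,k}$ (by $|Y|\le K$, the clipping $|f|\le K$ and the constraint $\sum|c_i|\le\beta_n$), so a uniform law of large numbers bounds $B_{n,k}$ through a covering number of $\mathscr{F}_{n,k}$; invoking the pseudo-dimension estimate for one-hidden-layer sigmoidal networks with $k$ units on $\mathbb{R}^{d_m}$, which is of order $k\,d_m\log(k\,d_m)$, together with Pollard's/Haussler's inequality, one obtains $B_{n,k}\le c_2\,k\,d_m\log(n)/n$ (the slowly growing $\beta_n$ factors absorbed into the logarithm since $\beta_n^4/n^{1-\delta}\to0$, and $\log\log$ terms suppressed). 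Hence
\[
E\!\int (m_n(z)-m(z))^2\,\mu(dz)\ \le\ \frac{c_1}{k}\ +\ c_2\,\frac{k\,d_m\log n}{n}.
\]

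Finally I would optimise in $k$. The right-hand side is convex in $k$; differentiating and setting the derivative to zero gives $k^2=(c_1/c_2)\,n/(d_m\log n)$, so the balancing choice is $k=O\!\big(\sqrt{n/(d_m\log n)}\big)$, at which the error bound is of order $\sqrt{d_m\log n/n}$. It remains to check admissibility: this $k\to\infty$, and if $\beta_n$ grows slowly enough it still satisfies $k\beta_n^4\log(k\beta_n^2)/n\to0$ and $\beta_n^4/n^{1-\delta}\to0$, so the hypotheses of the strong-consistency result quoted before the proposition remain in force and the displayed bound is legitimate.

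The step I expect to be the main obstacle is the estimation bound: obtaining the correct \emph{linear} dependence on $d_m$ (with only logarithmic dependence on $k\,d_m$) requires the sharp pseudo-dimension bound for sigmoidal networks rather than a crude count of the $O(k\,d_m)$ parameters, together with the bookkeeping needed to keep the squared-error loss uniformly bounded (truncation of $m_n$ to $[-K,K]$ and the weight constraint $\sum|c_i|\le\beta_n$) so that the uniform deviation inequality applies. The approximation estimate $c_1/k$ is classical once the integrability condition on $\int_{S_r}f^2\,d\mu$ is interpreted as a norm condition on the target, and the concluding balancing is a one-line computation.
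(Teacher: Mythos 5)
Your overall strategy---truncate to $[-K,K]$, split the $L_2$ error into an approximation term $O(1/k)$ (reading the integrability hypothesis as a Barron-type spectral-norm condition, which is also how the paper's own proof glosses its ``Fourier'' condition) and an estimation term, then balance in $k$---is the right one, and it is in fact more self-contained than the paper's argument, which simply quotes the already-optimized convergence rate $O\big(\beta_n^2(d_m\log(\beta_n n)/n)^{1/2}\big)$ from Gy\"orfi et al.\ together with the $O(1/k)$ approximation bound and then equates $1/k$ with $\sqrt{d_m\log n/n}$. However, there is a genuine gap in your estimation bound. You define $B_{n,k}$ as (twice) the expected \emph{uniform absolute} deviation $E\sup_{f\in\mathscr{F}_{n,k}}\big|\frac{1}{n}\sum_i(f(Z_i)-Y_i)^2-E(f(Z)-Y)^2\big|$ and claim $B_{n,k}=O(k\,d_m\log n/n)$ via pseudo-dimension plus Pollard/Haussler. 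As stated this fails: for a uniformly bounded loss class of pseudo-dimension $D\asymp k\,d_m$, Pollard's inequality (and any plain uniform law of large numbers) controls that supremum only at the rate $\sqrt{D\log n/n}$, and this square-root rate is essentially tight for the uniform deviation. Inserting the correct bound for the quantity you actually defined, the balance $1/k\asymp\sqrt{k\,d_m\log n/n}$ would give $k\asymp\big(n/(d_m\log n)\big)^{1/3}$, not the claimed $\sqrt{n/(d_m\log n)}$, so the error propagates to the final exponent.

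The missing idea is that the fast, linear-in-$D/n$ rate holds for the \emph{excess risk} of the least-squares minimizer directly, not for the uniform deviation: since $|Y|\le K$ and the truncated estimates are bounded, the variance of $(f(Z)-Y)^2-(m(Z)-Y)^2$ is bounded by a constant times its expectation, and a relative-deviation / Bernstein-type argument for bounded least squares (Gy\"orfi et al.\ 2002, Theorem 11.5; equivalently Barron's 1994 index-of-resolvability argument, or Lee--Bartlett--Williamson) yields $E\int(m_n-m)^2\,d\mu \le c\,\frac{k\,d_m\log(n\beta_n)}{n}+2\inf_{f\in\mathscr{F}_{n,k}}\int(f-m)^2\,d\mu$. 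With that replacement your balancing computation goes through verbatim and reproduces the paper's conclusion $k=O\big(\sqrt{n/(d_m\log n)}\big)$ with resulting error $O\big(\sqrt{d_m\log n/n}\big)$; your concluding admissibility check of the consistency conditions is fine.
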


\begin{proof}
To prove Proposition 1, we are going to recall the results of the
proofs of universal consistency and rate of convergence \cite[Page
301-326]{gyorfi2006distribution}.

The expression for rate of
convergence of the neural network estimates as defined in
$\mathscr{F}_{n,k}$ is as follows:

\begin{equation}
E\int_{S_{r}}|m_{n}(z)-m(z)|^{2}\mu(dz)= O\bigg(\beta_{n}^{2}\bigg(
\frac{d_{m}log(\beta_{n}{n})}{n}\bigg)^{1/2} \bigg)  \quad
\mbox{since} \quad d_{m} > 1
\end{equation}
Since $\beta_{n} < \mbox{constant} < \infty$ and $d_{m} (>1)$ is
a fixed constant. We can write (5) as
\begin{equation}
E\int_{S_{r}}|m_{n}(z)-m(z)|^{2}\mu(dz)=O\bigg(\sqrt{\frac{d_{m}log(n)}{n}}\bigg)
\end{equation}
It is noted that average estimation error (LHS of (6)) approaches to
$0$ as $n \rightarrow \infty$. \\
To obtain upper bound on the rate of convergence we will have to
impose some more regularity conditions as follows: For every continuous function $f: \mathbb{R}^{d_{m}}\rightarrow
[-K,K]$ satisfying the properties of fourier transformation, as
mentioned in Proposition 1, we can write an expression of the rate
of approximation of neural networks as follows:

\begin{equation} \inf_{f \in
\mathscr{F}_{n,k}}\int_{S_{r}}|f(z)-m(z)|^{2}\mu(dz) =
O\bigg(\frac{1}{k}\bigg)
\end{equation}
It is noted that approximation error (LHS of (7)) approaches to
$0$ as $k \rightarrow \infty$. \\

Since the conditions of strong universal consistency of neural
networks and an additional restriction on $f(z)$ are assumed, we can
now put an upper bound to the estimation error using the
approximation error.

Bringing (6) and (7) together, we have the following:
\[
E\int_{S_{r}}|m_{n}(z)-m(z)|^{2}\mu(dz) - \inf_{f \in
\mathscr{F}_{k,n}}\int_{S_{r}}|f(z)-m(z)|^{2}\mu(dz) = O\bigg(
\sqrt{\frac{d_{m}log(n)}{n}}  + \frac{1}{k}\bigg)
\]
For optimal choice of $k$, the problem reduces to equating
$\sqrt{\frac{d_{m}log(n)}{n}}$ with $\frac{1}{k}$ which gives
$k=O\bigg( \sqrt{\frac{n}{d_{m}log(n)}}\bigg)$
\end{proof}

\begin{remark}
The optimal choice of k is found to be
$O(\sqrt{\frac{n}{d_{m}log(n)}})$ and accordingly we have to choose
the number of hidden neurons in the RT-ANN model. For practical use,
one can use $k=\sqrt{\frac{n}{d_{m}log(n)}}$ for small data sets to
get the desired accuracy of the hybrid RT-ANN model. The application
of this model and its implementation is shown in Section 4. It is
found that the proposed model is highly competitive than the other
conventional regression models in terms of accuracy. The universal
consistency of the proposed hybrid RT-ANN model is justified, and it
also gives the upper bound for the number of hidden neurons in the
model.
\end{remark}

In our proposed model, the selected important features using RT and
RT outputs as input features to the neural network with the number of
hidden neurons to be chosen as $O(\sqrt{\frac{n}{d_{m}log(n)}})$
make a strong universally consistent regression model. It is then
applied to solve Krofta efficiency problem and proved to be highly
effective through experimental analysis as shown in Section 4.

\section{Application} \label{Application}

The current problem of Krofta Efficiency aims to improve the Krofta
fiber-filler recovery by reducing the fiber, filler losses and
improving the paper qualities in terms of opacity, formation and
machine runnability. Krofta is fiber-filler recovery equipment
having floatation cum sedimentation cell. The inlet to Krofta is the
paper machine lean backwater which undergoes froth flotation when
treated with chemical and air
\cite{marques2000use,aldrich2010online}. The chemical forms the
flocks from the fines present in the lean backwater and air helps to
form the cake over Krofta supracell for recovery. Supracell removes
solids through air floatation and sedimentation process. Turbulence
caused by water movement is a significant factor in floatation and
greatly reduces the efficiency of the other types of floatation
units. In conventional, there must always be water movement for the
water to flow from inlet to outlet. With the supracell, the inlet
and outlet are not stationary but are rotating about the center. The
rotation is synchronized so that the water in the tank achieves zero
velocity during floatation \cite{zhou1994role}. This means that the
efficiency of floatation is significantly increased to nearly
maximum theoretical limits. In practical terms, this allows better
clarification in smaller surface areas and a much shallower tank.
Water is processed from the inlet to outlet in 2-3 minutes. Air is
dissolved into the water using Krofta air dissolving tube (ADT)
\cite{krofta1998three} and unclarified water is released through a
valve. The water flows in at the exact center, through a rotary
joint, and into the distribution duct. Coarse air is released
through a vent pipe in the duct. The flow is directed to eliminate
turbulence. Since the inlet distribution is moving forward at the
same speed that the water is flowing out, the water stays in one
spot in the tank without any movement during floatation. The floated
material is recovered from the top surface by means of the Krofta
spiral scoop \cite{krofta1986apparatus}. The scoop is designed to
remove the floated material at the highest possible consistency,
with a minimum surface disturbance. The level of water determines
the consistency of the floated material removed. The Krofta
floatation system removes the solid content in the water by floating
them to the surface for removal. The reason why fiber will float,
even if they are heavier than water, is that small air bubbles
attach themselves to the particles or flocks and make them buoyant
\cite{endo2001vapor}. The process flow diagram of Krofta supracell
is shown in Figure 2, and the mechanism for forming the air bubbles
is as follows:

\begin{itemize}
\item  The water is pressurized to feed to the ADT.

\item  Air is added to the pressurized water and gets dissolved at
this pressure.

\item  The pressure is released after the water passes through a
valve. The water can no longer hold the extra air which was
absorbed, so that small bubbles form spontaneously throughout the
liquid. The bubbles formed are tiny.

\item  A small mastering pulp feeds the chemical flocculants.
Chemicals are used to increase the clarifier efficiency by flocking
out small and colloidal particles, that otherwise would not float or
settle in the clarifier.
\end{itemize}

\begin{figure}[t]
\centering
\includegraphics[scale=0.65]{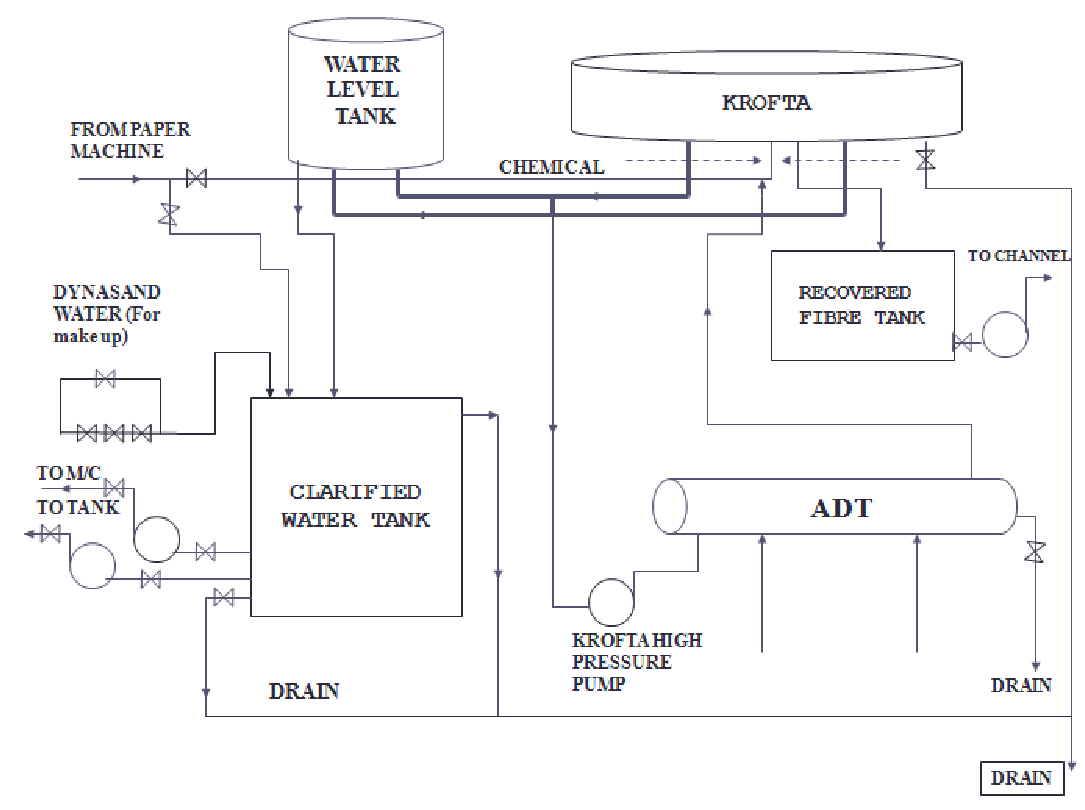}
\caption{Process Flow Diagram of KROFTA Supracell}
\label{figmodRWNGT}
\end{figure}

\subsection{\textbf{Data Collection Plan}}

Initially, the efficiency of Krofta performance which needs
improvement was measured. The efficiency of Krofta
performance\footnote{$ Efficiency = \frac{(Inlet PPM - Outlet
PPM)*100}{Inlet PPM }$} is quantified in terms of Inlet parts per
million (Inlet PPM) and Outlet PPM. To identify the causal
parameters affecting the Krofta efficiency, a failure mode effect
analysis (FMEA) was done with the help of process experts. FMEA is a
useful analytical tool used in the process industry for
understanding possible causes of failures with their impacts and
frequencies. The information collected through FMEA are namely;
potential failure modes, potential failure effects, potential
causes, mechanism of failure and their severity were analyzed. The
potential areas of failures were identified where risk priority
number (RPN) were found to be high. For the potential failure model
where RPN is high, the process experts carried out some minor
experiments to understand the exact reasons for the low efficiency
of Krofta. However, subsequently considering all such potential
areas, a cause and effect diagram was made. The cause and effect
diagram revealing the key relationship among causal variables with
Krofta efficiency is presented in Figure 3. Since not all the
parameters are controllable by the users of the process, with the
help of process experts and further brainstorming, a list of
controllable parameters is prepared. The data were then collected
for two types of tissues, namely tissue 1 and tissue 2, taking
Krofta efficiency percentage as the response and other parameters as
the causal variable.

\begin{figure}[t]
\centering
\includegraphics[scale=0.85]{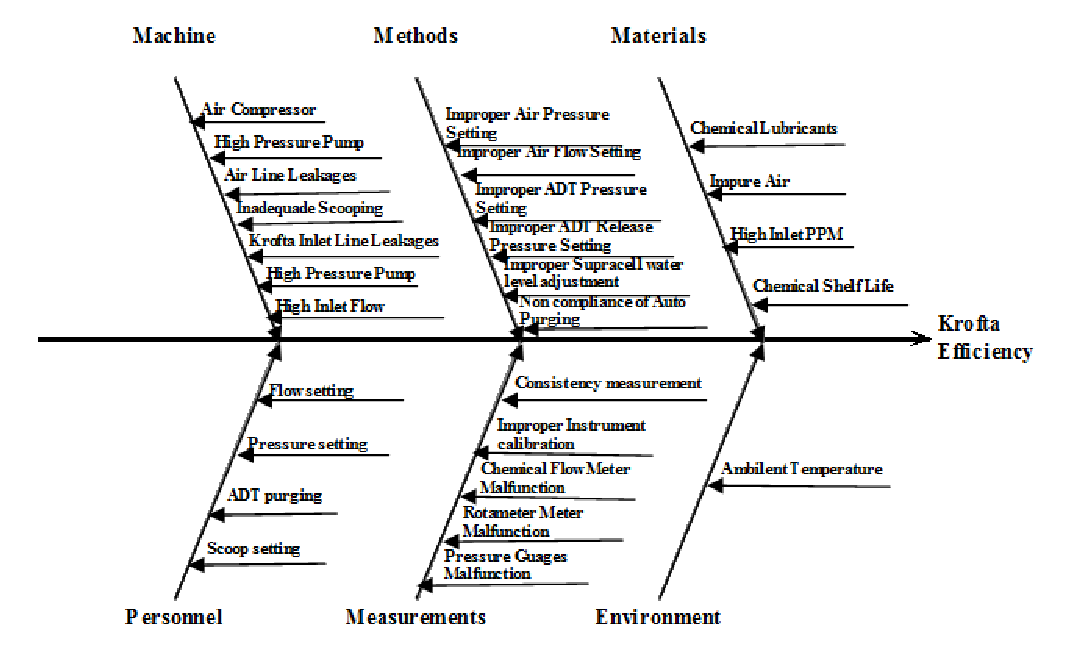}
\caption{Cause and Effect Diagram for Krofta Efficiency}
\end{figure}

\subsection{\textbf{Description of Datasets}}

The dataset collected has the following parameters possibly
affecting the Krofta performance for both the tissues: Inlet Flow,
Water Pressure (water inlet pressure to ADT), Air Pressure, Pressure
of Air-Left, Pressure of Air-Right, Pressure of ADT-D Left, Pressure
of ADT-D Right and Amount of chemical lubricants. A sample dataset
for both the tissues are given in Tables 1 and 2.

\begin{table}[H]
\tiny \centering \caption{Sample Dataset for Tissue 1}
    \begin{tabular}{|c|c|c|c|c|c|c|c|c|c|}
        \hline
        Sl. No. & Inlet Flow & Water Pressure & Air Pressure  & Air-Left  & Air-Right  &  ADT-D  &  ADT-D   &  Amount of &  Recovery  \\
                &            &                &               &           &            &  Left   &  Right   &  chemical  & Percentage \\ \hline
        1       &   2804     &      5.8       &     5.0       &    2.6    &    2.4     &   2.3   &   2.7    &     2.0    &   49.74    \\
        2       &   3200     &      6.0       &     7.0       &    3.4    &    4.0     &   2.2   &   2.7    &     2.5    &   87.13    \\
        3       &   2600     &      6.0       &     5.0       &    2.4    &    1.2     &   3.1   &   4.2    &     1.0    &   96.99    \\
        4       &   3002     &      6.2       &     6.4       &    2.1    &    1.5     &   3.0   &   4.0    &     2.0    &   97.46    \\
        5       &   2899     &      6.2       &     5.7       &    2.0    &    1.2     &   3.1   &   4.0    &     2.0    &   97.91    \\
        6       &   2995     &      6.0       &     6.0       &    1.5    &    4.5     &   4.0   &   4.8    &     4.0    &   28.87    \\
        .       &     .      &       .        &      .        &     .     &     .      &    .    &    .     &      .     &     .      \\
        .       &     .      &       .        &      .        &     .     &     .      &    .    &    .     &      .     &     .      \\
        .       &     .      &       .        &      .        &     .     &     .      &    .    &    .     &      .     &     .      \\
        .       &     .      &       .        &      .        &     .     &     .      &    .    &    .     &      .     &     .      \\
        \hline
    \end{tabular}

\end{table}

\begin{table}[H]
\tiny \centering \caption{Sample Dataset for Tissue 2}
    \begin{tabular}{|c|c|c|c|c|c|c|c|c|c|}
        \hline
        Sl. No. & Inlet Flow & Water Pressure & Air Pressure  & Air-Left  & Air-Right  &  ADT-D  &  ADT-D   &  Amount of &  Recovery  \\
                &            &                &               &           &            &  Left   &  Right   &  chemical  & Percentage \\ \hline
        1       &   1794     &      5.2       &     5.6       &    2.4    &    1.6     &   3.6   &   4.0    &     3.0    &   97.47    \\
        2       &   1703     &      6.2       &     6.0       &    2.9    &    1.0     &   3.0   &   4.2    &     2.0    &   26.94    \\
        3       &   1139     &      6.5       &     6.0       &    1.2    &    1.7     &   3.0   &   4.6    &     2.0    &   33.05    \\
        4       &   1448     &      6.4       &     5.8       &    1.0    &    2.1     &   3.2   &   4.0    &     2.0    &   96.80    \\
        5       &   1614     &      5.5       &     5.0       &    2.0    &    2.1     &   3.8   &   4.7    &     2.0    &   97.01    \\
        6       &   1472     &      6.6       &     6.8       &    3.7    &    3.1     &   5.2   &   4.8    &     4.0    &   97.77    \\
        .       &     .      &       .        &      .        &     .     &     .      &    .    &    .     &      .     &     .      \\
        .       &     .      &       .        &      .        &     .     &     .      &    .    &    .     &      .     &     .      \\
        .       &     .      &       .        &      .        &     .     &     .      &    .    &    .     &      .     &     .      \\
        .       &     .      &       .        &      .        &     .     &     .      &    .    &    .     &      .     &     .      \\
        \hline
    \end{tabular}

\end{table}

\subsection{\textbf{Performance Evaluation}}  \label{Performance Evaluation}

To date, a large number of performance metrics have been proposed
and employed to evaluate the accuracy of the regression model, but
no single performance metric has been recognized as the universal
standard. As a result of this, we need to assess the performance
based on multiple metrics, and it will be interesting to see if
different metrics will give the same ranking for the different
regression models to be tested. The metrics used in this study are:
mean absolute error (MAE), root mean square error (RMSE), mean
absolute percentage error (MAPE), the coefficient of multiple
determination ($R^{2}$) and adjusted $R^{2}$ (Adj($R^{2}$)). The use
of these measures represents different angles to evaluate regression
models. The first two are absolute performance measures while the
third one is a relative measure and the last two are measures of
``goodness of fit" for the regression models. The formulae used for
the performance metrics are as follows:
\[
MAE=\frac{1}{n}\sum_{i=1}^{n}\big |y_{i}-\widehat{y}_{i} \big | ;
RMSE = \sqrt{\frac{1}{n}\sum_{i=1}^{n}(y_{i}-\widehat{y}_{i})^{2}} ;
MAPE=\frac{1}{n}\sum_{i=1}^{n}\bigg |
\frac{y_{i}-\widehat{y}_{i}}{y_{i}} \bigg | ;\] \\
\[R^{2}=1- \bigg
[\frac{\sum_{i=1}^{n}(y_{i}-\widehat{y}_{i})^{2}}{\sum_{i=1}^{n}(y_{i}-\overline{y})^{2}}
\bigg ] ; Adj(R^{2})=1-\bigg [ \frac{(1-R^{2})(n-1)}{n-k-1} \bigg ]
\]

where, $y_{i},\overline{y},\widehat{y}_{i}$ denote the actual value,
average value and predicted value of the dependent variable,
respectively for the $i^{th}$ instant. Here $n, k$ denote the number
of data used for performance evaluation and the number of
independent variables, respectively. The lower the value of MAE,
RMSE, and MAPE and the higher the value of $R^{2}$ and adjusted
($R^{2}$), the better the model is.

\subsection{\textbf{Analysis of Results}}

We have shuffled the observations of the Krofta efficiency dataset
randomly and split it into training and testing data sets in a ratio
of 70 : 30. Each experiment is repeated 5 times with different
randomly assigned training and test sets and we will finally report
the averages of the performance metrics observed over 5 times
validations in Table 3.

With the data sets described in Section 4.2, a few popular
regression  models such as multiple linear regression (MLR),
Stepwise Regression, partial least square (PLS) Regression, SVR,
multiple adaptive regression spline (MARS) and neural tree model
were first applied but resulted in very low $R^{2}$ and $Adj(R^{2})$
and very high MAE, RMSE and MAPE. Then we started exploring
conventional nonparametric models such as RT and ANN which perform
better than the parametric models. Then we applied our proposed
hybrid RT-ANN model. All the methods were implemented in the R
Statistical package on a PC with 2.1 GHz processor and 8 GB memory.
We first select important features using RT algorithm which
indicates that Krofta efficiency is significantly dependent on Water
Pressure, Air Pressure, Inlet Flow and ADT-D left for Tissue 1. And
for Tissue 2, the important features are Water Pressure, Air
Pressure, ADT-D Left and ADT-D Right. Then we run our base model
with the selected important features along with RT output as shown
in Figures 4 and 5. The number of hidden layers was chosen to be
$\sqrt{\frac{n}{d_{m}log(n)}}$ where $d_{m}=5$ for this case.
Performance metrics as defined in Section 4.3 are computed for each
of the competitive models. Table 3 shows that the proposed hybrid
RT-ANN model outperforms the others in terms of the
performance metrics.\\

Based on the model we further created an experimental design to
obtain the optimal level of the tuning parameters. Final
recommendations based on the results of the design of experiments
were implemented in the process to monitor the Krofta efficiency.
However, we have discussed here only the proposed model and its
accuracy level compared to other relevant state-of-the-art models.
Our model helped the manufacturing process industry to achieve an
efficiency level of about 85\% from the existing level of about 60\%
to improve the Krofta supracell recovery percentage.

\begin{table}[H]
\caption{Quantitative measures of performance for different
regression models.}
    \begin{tabular}{|c|c|c|c|c|c|c|}
        \hline
        Regression Models                             & Tissue Type     & MAE     & RMSE     & MAPE     & $R^{2}$    & Adj($R^{2}$) \\ \hline
        \multirow{2}{*}{Multiple Linear Regression}   & Tissue 1        & 17.92   & 23.32    & 48.40    & 26.95      & 17.30        \\
                                                      & Tissue 2        & 11.67   & 16.94    & 25.21    & 56.70      & 37.83        \\ \hline
        \multirow{2}{*}{Stepwise Regression}          & Tissue 1        & 18.31   & 23.51    & 48.57    & 27.00      & 09.10        \\
                                                      & Tissue 2        & 12.04   & 17.09    & 26.01    & 55.97      & 42.65        \\ \hline
        \multirow{2}{*}{PLS Regression}               & Tissue 1        & 17.50   & 22.40    & 42.59    & 26.95      & 17.49        \\
                                                      & Tissue 2        & 14.47   & 20.16    & 30.19    & 25.56      & 20.19        \\ \hline
        \multirow{2}{*}{MARS}                         & Tissue 1        & 16.10   & 20.26    & 38.86    & 45.39      & 29.54        \\
                                                      & Tissue 2        & 15.71   & 19.86    & 34.16    & 40.49      & 30.57        \\ \hline
        \multirow{2}{*}{RT}                           & Tissue 1        & 06.31   & 10.54    & 16.35    & 85.07      & 81.21        \\
                                                      & Tissue 2        & 07.06   & 10.29    & 15.03    & 84.73      & 81.56        \\ \hline
        \multirow{2}{*}{ANN}                          & Tissue 1        & 05.15   & 08.55    & 08.88    & 90.18      & 87.65        \\
                                                      & Tissue 2        & 08.27   & 11.54    & 17.45    & 80.25      & 76.05        \\ \hline
        \multirow{2}{*}{SVR}                          & Tissue 1        & 05.47   & 08.97    & 08.89    & 86.78      & 82.13        \\
                                                      & Tissue 2        & 08.46   & 12.17    & 17.31    & 81.23      & 78.72        \\ \hline
        \multirow{2}{*}{Neural tree}                  & Tissue 1        & 04.95   & 06.44    & 06.12    & 92.36      & 90.53        \\
                                                      & Tissue 2        & 06.78   & 09.89    & 15.93    & 85.79      & 83.09        \\ \hline
        \multirow{2}{*}{Hybrid RT-ANN}                & Tissue 1        & 03.45   & 04.89    & 06.87    & 96.79      & 95.32        \\
                                                      & Tissue 2        & 05.91   & 08.60    & 12.67    & 88.84      & 87.75        \\ \hline
    \end{tabular}
\end{table}

\begin{figure}[H]
\centering
\includegraphics[scale=0.25]{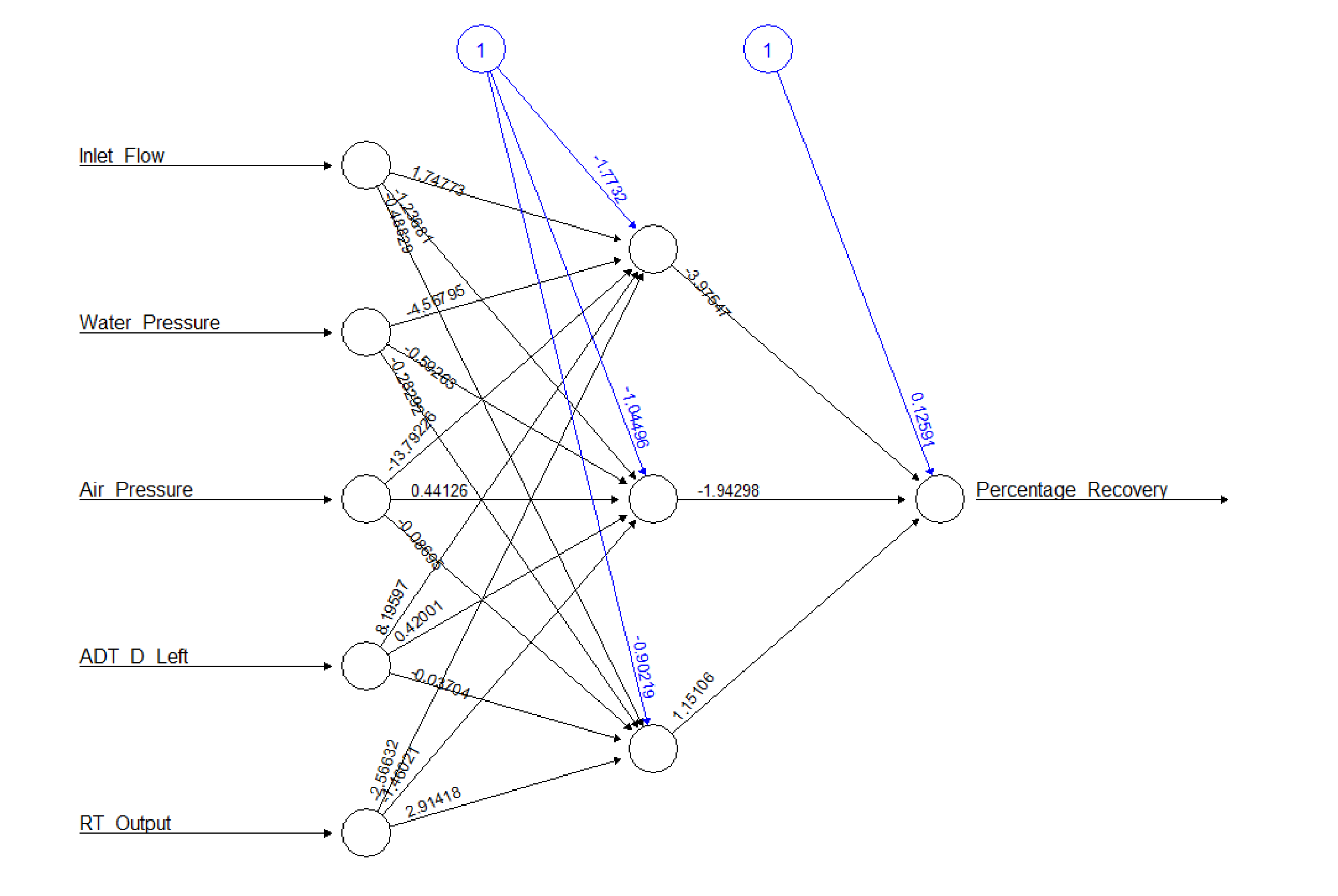}
\caption{Hybrid RT-ANN model for Tissue 1}
\end{figure}

\begin{figure}[H]
\centering
\includegraphics[scale=0.20]{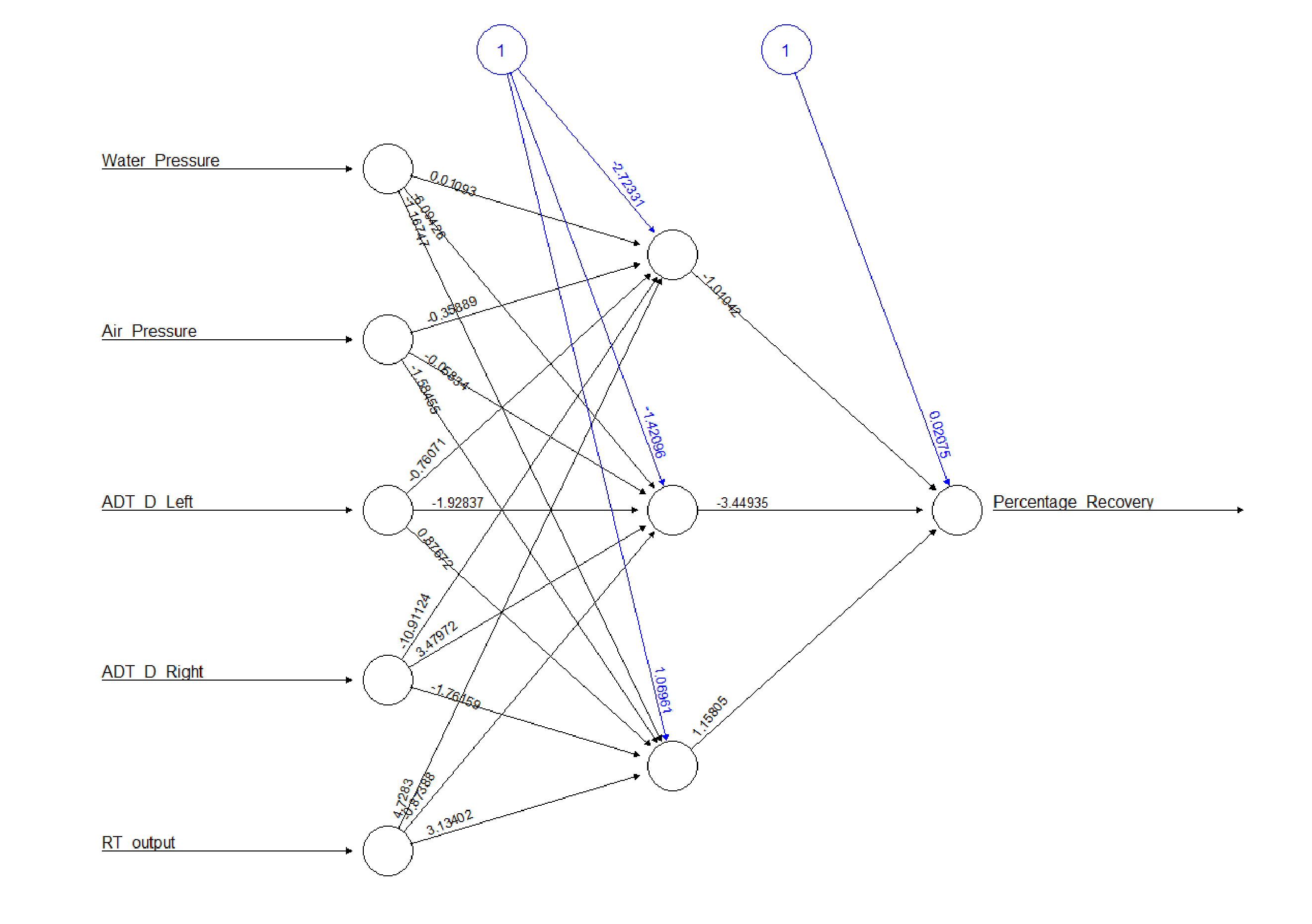}
\caption{Hybrid RT-ANN model for Tissue 2}
\end{figure}

\section{Conclusions and Discussions}

The purpose of this article is to develop a model for selection of
critical causal variable(s) of a manufacturing process. Our study
presented a hybrid RT-ANN model that combines both the neural
network and RT which gives more accuracy than all other traditional
models as shown in Table 3 to address the complicated problem of
Krofta efficiency improvement. We have found RT to be the optimal
technique for the selection of tuning parameters and found hybrid
RT-ANN model to be the optimal regression model for accurately
predicting the Krofta recovery percentage. Consequently, the hybrid
RT-ANN successfully demonstrated the best performance and offered a
practical solution to the problem of finding optimal levels for
selecting the tuning parameters to improve the Krofta efficiency.
Though the hybrid model developed was mainly to improve the
efficiency of Krofta, its proven theoretical consistency makes it
robust and hence can generally be applied to other similar
situations. We have also provided an upper bound for the number of
hidden neurons during the application of ANN. The proven model, when
applied to a complex chemical process for efficiency improvement,
provided the required causal variables. These variables were then
optimized using the design of experiments to achieve an efficiency
level of about 85\% from the existing efficiency level of about
60\%. Since the accuracy of the proposed model is quite high and it
is consistent, the model may be applied to other complex problems as
well. In the application of the Krofta efficiency problem, not only
the efficiency is enhanced, but it also resulted in lesser wastage
and provided a significant economic gain for the organization. It
also helped the organization to be more environmentally friendly as
well.


\bibliographystyle{elsarticle-num}

\bibliography{asm}

\end{document}